\documentclass[letterpaper, 10 pt, conference]{ieeeconf}  

\pdfminorversion=4

\usepackage{color}

\usepackage{amsmath} 
\usepackage{amssymb}  

\usepackage{todonotes}

\usepackage{algorithm}
\usepackage{balance}
\usepackage{algpseudocode}
\algtext*{EndIf}
\usepackage{graphicx}
\usepackage{capt-of}
\makeatletter
\def\BState{\State\hskip-\ALG@thistlm}
\makeatother

\usepackage{psfrag}
\usepackage{subcaption}
\usepackage[font=small,labelfont=bf]{caption}

\usepackage{tikz}
\usetikzlibrary{shapes,arrows,positioning}

\newtheorem{lemma}{Lemma}
\newtheorem{proposition}{Proposition}

\newtheorem{assumption}{Assumption}

\usepackage[compress]{cite}

\IEEEoverridecommandlockouts                              




\title{\LARGE \bf
A Spiking Neural Dynamical Drift-Diffusion Model on Collective Decision Making with Self-Organized Criticality
}

\author{ Yanlin Zhou, Chen Peng, and Qing Hui
\thanks{This work was supported by the Defense Threat Reduction Agency, Basic Research Award \#HDTRA1-15-1-0070.
}
\thanks{The authors are with the Department of Electrical and Computer Engineering, University of Nebraska-Lincoln, Lincoln, NE 68588-0511, USA, {\tt\small yanlin.zhou@huskers.unl.edu; chen.peng@huskers.unl.edu;  qing.hui@unl.edu}.}%
}

\begin{document}

\maketitle
\thispagestyle{empty}
\pagestyle{empty}

\begin{abstract}

This article proposes a novel collective decision making scheme to solve the multi-agent drift-diffusion-model problem with the help of spiking neural networks. The  exponential integrate-and-fire model is used here to capture the individual dynamics of each agent in the system, and we name this new model as Exponential Decision Making (EDM) model. 
We demonstrate analytically and experimentally that the gating variable for instantaneous activation follows Boltzmann probability distribution, and the collective system reaches meta-stable critical states under the Markov chain premises. With mean field analysis, we derive the global criticality from local dynamics and achieve a power law distribution. Critical behavior of EDM exhibits the convergence dynamics of Boltzmann distribution, and we conclude that the EDM model inherits the property of self-organized criticality, that the system will eventually evolve toward criticality. 

\end{abstract}

\section{Introduction}\label{sec:intro}

Self-Organized Criticality (SOC), a ground breaking achievement of statistical physics, has gained growing interest in neural firing and brain activity in recent years \cite{soc_fundm}. Bak's hypothesis \cite{bak_1996} and recent studies \cite{soc_fundm,Neurobiologically_SOC_Rubinov} suggest that criticality is evolutionarily chosen for optimal computational and fast reactionary purposes, and that the brain is always balanced precariously at the critical point.

Such critical dynamics emerge during the phase transition between randomness (sub-critical) and order (super-critical), and usually follow power-law distributed spatial and temporal properties \cite{Neurobiologically_SOC_Rubinov}. A dynamic network system with the SOC behavior then has the potential to be spatial and/or temporal scale free \cite{Watkins2016} and to fast switch between phases and attain optimal computational capability. This offers a possible approach to model a decision making process. 

The systems that exhibit SOC behavior are usually high dimensional and slowly driven, with nonlinearity properties \cite{bak_1996,Watkins2016}. To this end, Brochini \textit{et al} \cite{phase_2016} have discussed the phase transitions and SOC in stochastic spiking neural networks.
Also, Bogacz \textit{et al} \cite{formal_ddm} demonstrated that standard Drift Diffusion Model (DDM) can be used for stochastic spiking dynamics and they relate DDM to a highly interactive ``pooled inhibition" model.

However, to authors' best knowledge, although the SOC has been recognized as a fundamental property of neural systems \cite{soc_fundm}, there has yet to be a decision making model capitalizing the SOC property.

In this paper, we incorporate the nonlinearity of Exponential Integrate-and-Fire (EIF) model to replace the stochastic spiking scheme in DDM proposed by \cite{formal_ddm}. Introduced by Fourcaud-Trocme \textit{et al} \cite{eif_fourcaud}, the nonlinear EIF model is experimentally verified to be able to accurately capture the response properties. It will be demonstrated that neural sampling and mean field branching can be derived with the Boltzmann distribution. The proposed Exponential Decision Making (EDM) model therefore reaches a set of absorbing states, and the corresponding global criticality follows power law distribution, thus attaining the SOC behavior.

The contributions of this paper are summarized as follows.
\begin{enumerate}
	\item To the best of authors' knowledge, this paper is the first published work on modeling decision making processes with the SOC property.
	\item A collective decision making model, i.e., EDM, is proposed to implement the DDM methodology on EIF spiking neurons.
	\item A probability inference scheme on EIF sampling is proposed, which extends an existing leaky integrate-and-fire sampling method.
	\item Mean field analysis of the connectivity of EDM is given, which exhibits global criticality.
	\item Detailed analysis is given to reveal SOC behavior of the EDM model under criticality conditions.
\end{enumerate}

This paper is organized as follows. 
In Section~\ref{sec:model}, literature review is given and the necessary background concepts are introduced.
Then, in Section~\ref{sec:dm_dynam}, the dynamics of each agent as well as the EIF neural sampling are discussed.
In Section~\ref{sec:coll_beh}, the collective behavior of the network system is analyzed.
Section~\ref{sec:converge} provides convergence analysis and simulation results.
Section~\ref{sec:conclusions} concludes the paper and looks into future work.

\begin{figure}[htbp]
  \centering
      \includegraphics[width=0.48\textwidth]{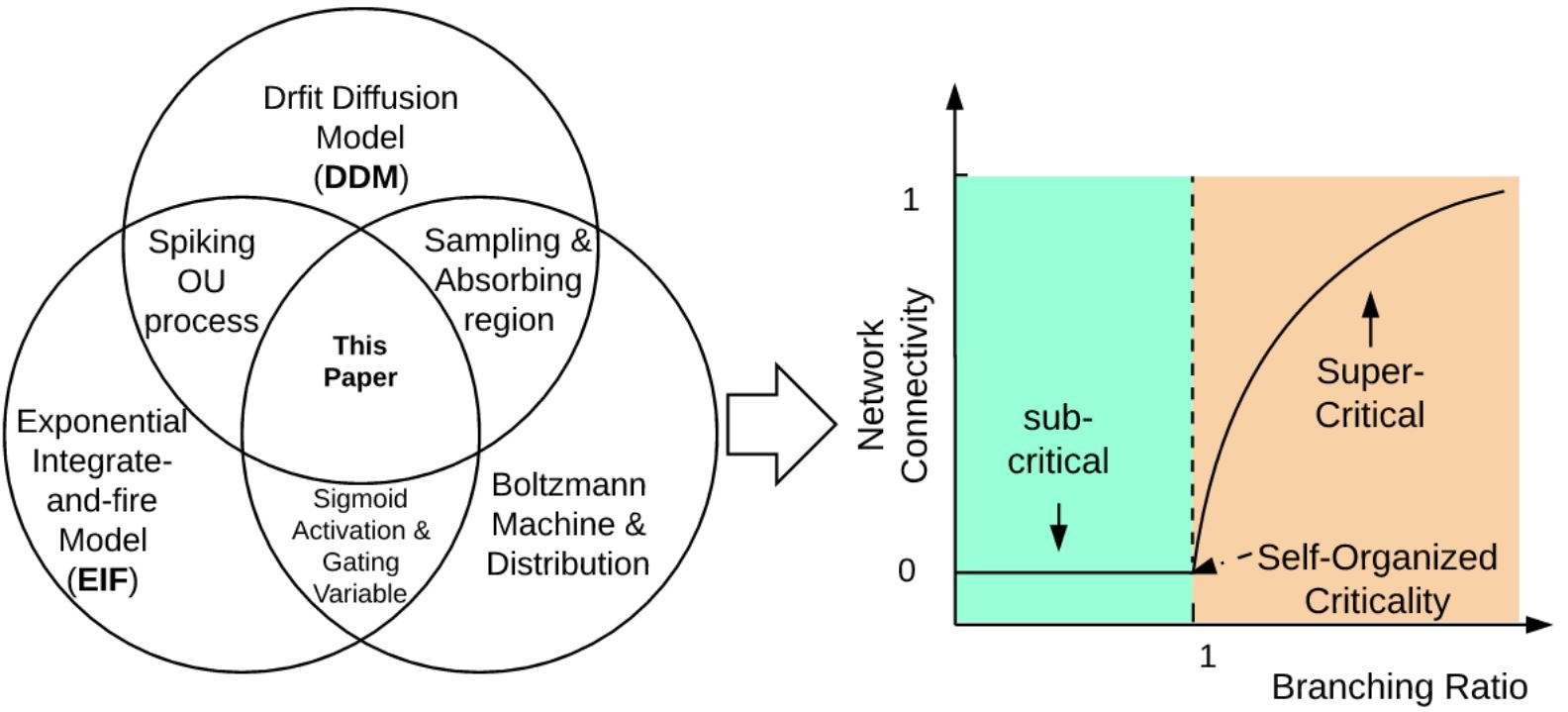}
  \caption{EDM model shows SOC property}{\label{fig:relation}}
\end{figure}

\subsection*{Notation and Preliminaries}

Here we use a classical directed graph representation $\mathcal{G}=(\mathcal{V},\mathcal{E},\mathcal{A})$ with nonempty finite number of nodes and edges.
Specifically, $\mathcal{V}$ is the set of nodes, $\mathcal{E}$ is the set of directed edges, and $\mathcal{A}=[a_{ij}]$ is an adjacency matrix with weights $a_{ij} > 0$ if, $\left<i,j\right>\in\mathcal{E}$, an edge from node $i$ to node $j$. 
Note that the assumed graph is simple, i.e., $\left<i,i\right>\notin\mathcal{E}$, $\forall i$, with no multiple edges going in the same direction between the same pair of nodes and no self loop. 
In this case, the diagonal elements of $\mathcal{A}$ are zero. In addition, the Laplacian matrix of $\mathcal{G}$ is denoted by $L$.

\section{Backgrounds and Related Information}\label{sec:model}

\subsection{Self-Organized Criticality}\label{ssec:socintro}

Self-Organized Criticality describes a self-tuned internal interactions that show critical dynamics in complex systems \cite{bak_1996}. The interacting node groups are called active sites while the nodes that are less sensitive to the input are called inactive sites. Usually in the sand pile model, each agent has their own steep slope, which represents the membrane potential of the spiking neurons. When a certain threshold is hit and the sand in that specific area is steep enough $Z_{local} > Z_{critical}$, the avalanches will be triggered, which follow a power-law distribution of $1/f$ noise.
Plenz and Beggs \cite{Beggs11167} observed a similar pattern of avalanches in the cortical neural electrical activity, which was the first evidence that the brain functions at criticality.

For the spiking neural network sense, if an agents activates too many neighboring neurons (super-critical), it leads to the massive activation of the entire network, while if too few neurons are activated (sub-critical), propagation dies out too fast \cite{formal_ddm}.

In our case, the local rigidity level is expressed by two terms: the firing threshold of each agent, and the correlation between each two agents in the same local active sites.

\subsection{Boltzmann Machine}\label{ssec:bmintro}

Boltzmann Machines (BM) is a special type of stochastic recurrent neural networks based on non-stochastic Hopfield nets. In recent years, BM's property of binary output attracts more attention in both the theoretical neuroscience and the high dimensional parallel stochastic computation \cite{bm_basic,neural_sampling_2011}. 

BM is proven to be efficient for the models with connectivity properly constrained, to be specific, machine learning and probability inference are two major applications. In this paper, we apply neural sampling and show that the probability function of the gating current variable for the activation term follows the Boltzmann functions.

The global energy function of the Boltzmann Machine is defined as:
$
		E = -  \sum_{i<j}h_{ij} s_i s_j - \sum_{i}b_i s_i 
$
, where 
$E$ is the global energy,
$h_{ij}$ is the connection strength between Unit $j$ and Unit $i$,
$s_i\in \{0,1\}$ is the state of Unit $i$, and
$b_i$ is the bias of Unit $i$.

As for the Boltzmann distribution, the probability that the $i$th unit is on is
	\begin{equation}\label{eq:bmprobdist}
		P_{i=on} = \frac{1}{1+\exp(-\triangle E_i/T)},
	\end{equation}
where $T$ is the temperature of the system.

The probability is calculated using only the information of the energy difference from the initial state, and the temperature of the current time. In our framework, we consider the term $-\triangle E_i/T$ as a logistic activation function as in \cite{Barranca2014,richardson_2009}. It has already been shown in \cite{richardson_2009} that some stochastic neurons sample from a Boltzmann distribution. Ideally, after a long running period and without further inputs,  the probability of a global state will not be affected by other terms, i.e., time constants and conductance values in the EIF model. At this stage, the system is at its ``thermal equilibrium", that converges to a low temperature distribution where the energy level hovers around a global minimum.

This feature presents a similar behavior as SOC if we consider the criticality as this thermal equilibrium that energy level fluctuates around. Also, the log-probabilities eventually becomes a linear term, which helps us simplify the exponential term in the EIF model. Further discussion will be given in later sections.

Moreover, the neural sampling technique in the later section incorporates the Boltzmann machine according to some local switching, with conditional probability integrated, the multivariable Boltzmann joint distribution has the form \cite{noise_resource}
	\begin{equation}\label{eq:boltzmenergyt}
		P_m=\dfrac{e^{-\epsilon_m/kT}}{\sum_{n=1}^M e^{-\epsilon_j/kT}},
	\end{equation}
where $P_m$ stands for the probability of state $m$, 
$\epsilon_m$ is the energy at state $m$, 
$k$ is a constant, and 
$M$ is the total number of the states.

\subsection{DDM}\label{ssec:ddm}
Drift Diffusion Model has been applied on Two-Alternative Forced-Choice (TAFC) task in an extensive amount of work (see \cite{formal_ddm,ddm_naomi} for instance).
The fact that DDM integrates the difference between two choices according to one or two threshold makes it possible to describe a decision making process in a spiking neural network.

In the pure DDM, the accumulation of the unbiased evidence has the form
	\begin{equation}\label{eq:pureddm}
		dx = g dt + \beta dw , ~x(0) = 0,
	\end{equation}
where
$dx$ represents the changes in difference over the time interval $dt$,
$g$ is the increase in evidence supporting the correct choice each time,
$w$ is the independent, identically distributed (i.i.d.) Wiener processes, and 
$\beta$ is the standard deviation.
The probability density $P(x,t)$ is normally distributed with mean $g t$ and standard deviation $\beta\sqrt{t}$.

Since the second term in (\ref{eq:pureddm}) is represented by a standard Wiener process that describes the noise, it is common to consider $dx$ in DDM to be the change in membrane potential within a certain amount of time \cite{stochastic_diffusion}.

Here we consider a network system with $N$ agents. Each agent relates the DDM model to the non-stationary dynamics of the firing of an EIF spiking neural model. While the forced-response protocol is usually considered, we follow the free-response protocol, that each consecutive fires determine the range of the time interval. The common assumption made for this equation usually considers $g>0$ to support the first choice, and $g<0$ for the other \cite{formal_ddm}. The term $g$ can either be a constant for inactive nodes, or a function for active nodes that depends on membrane potential. 

While (\ref{eq:pureddm}) only describes the dynamics of a single DDM system, we need extra terms to capture the impact from neighbors. We have the following stochastic diffusion process with an initial condition $x_0$:
	\begin{equation}\label{eq:voltage_ddm}
		dx = \big( \alpha(x(t),t)(x(t)-x_0) +g(x(t),t) \big) dt + \beta(x(t),t) dw,
	\end{equation}
where $\alpha(t)$ is a measurable gain function that models the external input to accelerate the potential increment, and linear drifting term $g(t)$ represents the dynamic drifting variable of the node itself.
In this regime, we have transferred our model to a Ornstein-Uhlenbeck (OU) process, which is known to be the solution to the famous Fokker-Planck equation. 
Here, to further simplify the model, we may eliminate the afterhyperpolarization, that is, let $x_0 = 0$.
	
For the model proposed above, it is possible to receive the spike generations with arbitrary shape, i.e., different spiking time intervals and different incremental speed of membrane potential. With a proper defined activation function, which will be discussed in Section \ref{sec:dm_dynam}, the behavior of each single stochastic diffusion process can be bounded. Before further discussing into the individual dynamics and their boundedness properties, we need to look into the specific local dynamics by applying a most commonly used neuron model.

\subsection{Generalized Exponential Integrate-and-Fire (EIF) Model}\label{ssec:EIF}

Exponential integrate-and-fire (EIF) is a well developed biological neuron model introduced by Fourcaud-Trocme \textit{et al} \cite{eif_fourcaud} as an extension of the standard leaky integrate-and-fire model. As concluded in several studies \cite{Barranca2014,richardson_2009}, EIF is a suitable simple model for very large scale network simulations. For the generalized EIF \cite{richardson_2009}, arbitrary spike shapes are allowed and gated currents usually reach a steady-state with nonlinear voltage activation function.

The EIF model holds a nonlinearity property consisting of a linear leakage term combined with an exponential activation term, which follows a simple RC-circuit dynamics before $V$, the membrane potential, reaches a set threshold $V_T$. After reaching the threshold, it can be considered that the neuron has fired, and its membrane potential is then set to a resting voltage, $V_R$, approximately $-60$ mV \cite{Barranca2014,richardson_2009} or $0$ mV \cite{stochastic_diffusion} by different assumptions.   

The dynamics of the membrane potential is given by
	\begin{equation}\label{eq:eif}
		C \dfrac{dV}{dt} = -\varrho_L(V-V_L)+ \varrho_T\Delta_T \exp \bigg( \dfrac{V-V_T}{\Delta——T} \bigg)+I_{ion}.
	\end{equation}
In this equation, $C$ is the membrane capacitance,
$V_T$ is the membrane potential threshold,
$\Delta_T$ is the sharpness of action potential initiation, or slope factor,
$V_L$ is the leak reversal potential,
$\varrho$ is the conductance,
and $I_{ion}$ is input current. While $I_{ion}$ only represents synaptic current in Fourcaud's model, here we have extended the ionic current by summing up input current $I_{neib}$ from neighbors with connectivity, external noise current $I_{noise}$ that integrates i.i.d. Wiener process, and synaptic input $I_{syn}$ that incorporates the drifting term, serving as a bias. We have
		$I_{ion}=I_{neib}+ I_{noise}+I_{syn}$, where the term $I_{neib}$ takes identical form as leakage current in the first term of (\ref{eq:eif}), while $I_{syn}=\varrho_{syn}\Gamma(V-V_{syn})$ represents the slow voltage activated current with a gating variable $\Gamma=\Gamma(t)$. The term $\Gamma_\infty=\lim_{t\to\infty}\Gamma(t)$ can be used to describe the instantaneous activation.

Here, we alter the usually constant conductance $\varrho$, and change it to a function of $V$ and $t$. Multiplying $dt$ to both sides of the (\ref{eq:eif}), we now have
 	\begin{align}\label{eq:eif_xdt}
		C dV =& -\varrho_L(V-V_L){dt}+ \varrho_L\Delta_T \exp \bigg( \dfrac{V-V_T}{\Delta——T} \bigg){dt} \nonumber\\
		&+ \varrho_{syn}\Gamma(V,t)(V-V_{syn}){dt} + \varrho_{neib}(V-V_{noise}){dt} \nonumber\\
		&+ I_{noise}dt.
	\end{align}

It is clear that most terms in (\ref{eq:eif_xdt}) have very similar forms to those in (\ref{eq:voltage_ddm}). As most current terms do not need to be altered to fit in (\ref{eq:voltage_ddm}), the conductance term can change over time and become a function. Functions $\varrho$ and $\alpha$ are sometimes interchangeable. However, the exponential term can be tricky to work around, and we will talk about it in the later part after done discussing absorbing state.

Henceforth, for simplicity purpose, we call this EIF and DDM combined model as Exponential Decision Model, or EDM for short.

\section{Decision Making Dynamics}\label{sec:dm_dynam}

The commonly discussed decision making process is an adaptive behavior that makes use of a series of external input variables and then leads to an optimal or sub-optimal choice of action over other competing alternatives.

We begin by discussing this optimal decision rule. There are two thresholds $z_i$ in DDM model, with the same magnitude but different signs to represent different choices. In EDM, we consider this choice to be optimal if the threshold of correct choice is reached, or sub-optimal, if our expectation value $\mathbb{E}$ ends up with the same sign as correct threshold but with smaller magnitude.

Now we start solving the OU process described in (\ref{eq:voltage_ddm}).
\begin{lemma}\label{thm:sol_ou}
The solution of the collective decision making system in (\ref{eq:voltage_ddm}) is given by
	\begin{equation}\label{eq:xsol_ref}
		x = e^{\alpha(t)}\bigg(c+\int^t_{0}e^{-\alpha(t-\eta)}g(\eta)d\eta+\int^t_{0}e^{-\alpha(t-\eta)}\beta(\eta)dw_{\eta} \bigg)
	\end{equation}
which has a similar form as in \cite{arnold1974stochastic}, with the updated expectation
\begin{equation}\label{eq:xsol_mean}
	\mathbb{E}(x(t)) = \int^t_{0}e^{-\alpha(t-\eta)}g(s)ds.
\end{equation}

\end{lemma}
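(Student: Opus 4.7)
The plan is to treat equation (\ref{eq:voltage_ddm}) as a linear scalar Ornstein--Uhlenbeck-type SDE in which, with $x_0 = 0$ set by the earlier simplification, the drift coefficient $\alpha(t)$, the bias $g(t)$, and the diffusion $\beta(t)$ are read as deterministic (time-only) functions, so that the standard integrating-factor / variation-of-parameters technique from linear SDE theory (cf.\ the reference \cite{arnold1974stochastic} already cited in the statement) applies directly. In the notation of the theorem, $e^{\alpha(t)}$ should be interpreted as the integrating factor $\exp\!\bigl(\int_0^t \alpha(s)\,ds\bigr)$, and $e^{-\alpha(t-\eta)}$ as $\exp\!\bigl(-\int_\eta^t \alpha(s)\,ds\bigr)$; under this reading the claimed formula is exactly the textbook closed form.

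Concretely, I would first define $A(t)=\int_0^t \alpha(s)\,ds$ and introduce the auxiliary process $Y(t) = e^{-A(t)} x(t)$. Applying It\^o's formula to $Y$ gives
\begin{equation*}
dY = -\alpha(t)\, e^{-A(t)} x(t)\,dt + e^{-A(t)}\, dx(t),
\end{equation*}
and substituting (\ref{eq:voltage_ddm}) cancels the $\alpha(t)\,x(t)$ terms, leaving the pure-forcing equation
\begin{equation*}
dY(t) = e^{-A(t)} g(t)\,dt + e^{-A(t)} \beta(t)\,dw_t .
\end{equation*}
Integrating this from $0$ to $t$, setting $c := Y(0) = x(0)$, and multiplying through by $e^{A(t)}$ recovers (\ref{eq:xsol_ref}).

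For the expectation (\ref{eq:xsol_mean}), I would take $\mathbb{E}[\,\cdot\,]$ in (\ref{eq:xsol_ref}). The It\^o integral $\int_0^t e^{-A(t-\eta)}\beta(\eta)\,dw_\eta$ is a martingale starting from $0$, so its mean vanishes; combined with the initial condition $c = x_0 = 0$, only the Lebesgue integral against $g$ survives, yielding $\mathbb{E}(x(t)) = \int_0^t e^{-\alpha(t-\eta)} g(\eta)\,d\eta$. A minor bookkeeping point is that the dummy variable appears as both $\eta$ and $s$ in the stated formula; this is just a relabeling.

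The main obstacle is interpretive rather than computational: the coefficients in (\ref{eq:voltage_ddm}) are written as $\alpha(x(t),t)$, $g(x(t),t)$, $\beta(x(t),t)$, but the closed-form (\ref{eq:xsol_ref}) only makes sense if the $x$-dependence is suppressed (otherwise the equation is genuinely nonlinear and the integrating factor argument collapses). I would make this reduction explicit at the start of the proof — for instance, by noting that along solutions the coefficients are evaluated at $x(t)$ and thus yield adapted processes, so the same integrating-factor manipulation still produces a valid representation in terms of iterated It\^o integrals; under the further assumption that $\alpha,g,\beta$ depend on $t$ only (as the answer formula implicitly requires), the expression specializes to (\ref{eq:xsol_ref}). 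I would also verify adaptedness and the It\^o-isometry integrability condition $\int_0^t e^{-2A(t-\eta)}\beta(\eta)^2\,d\eta < \infty$ to justify the zero-mean step in the expectation calculation.
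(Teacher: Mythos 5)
Your proposal is correct and follows essentially the same route as the paper: both are the standard variation-of-constants argument for a linear SDE with fundamental solution $\phi(t)=\exp\bigl(\int_0^t\alpha(\eta)\,d\eta\bigr)$, the paper defining $Y=c+\int_0^t\phi(\eta)^{-1}g(\eta)\,d\eta+\int_0^t\phi(\eta)^{-1}\beta(\eta)\,dw_\eta$ and asserting $x=\phi(t)Y$, while you run the same computation forward via It\^o's formula on $e^{-A(t)}x(t)$. Your version is somewhat more careful in that it explicitly flags the suppression of the $x$-dependence in $\alpha,g,\beta$ and justifies the zero-mean step for the expectation, neither of which the paper's proof addresses.
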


\begin{proof}
Let $\phi(t)$ be a fundamental solution matrix.

Also, let $Y$ be 
$$
c+\int^t_{0}\phi(\eta)^{-1}g(\eta)d\eta + \int^t_{0}\phi(\eta)^{-1}\beta(\eta)dw_{\eta}.
$$
Then $Y$ has the stochastic differential equation
$$
dY = \phi(t)^{-1} \big( g(t)dt+\beta(t)dw_t      \big),
$$
which implies
$$  
\begin{aligned}
x &= \phi(t) Y \\
&= \phi(t)\bigg(c+\int^t_{0}\phi(\eta)^{-1}g(\eta)d\eta + \int^t_{0}\phi(\eta)^{-1}\beta(\eta)dw_{\eta} \bigg).
\end{aligned}
$$
Furthermore, 
combining all above together, it is straightforward to conclude (\ref{eq:xsol_ref}).
\end{proof}

\subsection{Behaviors}\label{ssec:each_behav}

Here we consider a very special but common network system.
\begin{assumption}\label{brownagents}
For a high dimensional dynamical network system with $N$ agents described by (\ref{eq:voltage_ddm}), each unit integrates an inward stimulus $\alpha_i$ and receives signals $I_{neib}$ and noise $\beta_idw_i^j$ from local $j$ neighbors. The dynamics of $w^j$ follows Correlated Brownian motions, with standard correlation $col \in (-1,1)$.
\end{assumption}

In Reference \cite{tao_li}, the authors proposed a It\^{o} consensus S.D.E formalized by $N^2$-dimensional standard white noise, and expanded the right hand side terms in (\ref{eq:pureddm}) to a matrix form with graph theory, i.e., $\alpha(t)Lx(t)$. 
Different from their encoded gain function, we now assume that all the state information is available to others.
\begin{assumption}
For the considered network system with $N$ agents, the state of each agent is observable by others.
\end{assumption}

Extending (\ref{eq:voltage_ddm}), then we have the dynamic equation for each agent with neighbor's dynamics added
	\begin{multline}\label{eq:single_net_ddm}
		dx_i = \bigg( \sum_{j=1}^K \alpha(x_i(t),t)l_{ij}(y_{ji}-x_i(t))+ g(x_i(t),t) \bigg) dt \\ + \beta(x_i(t),t) dw,
	\end{multline} 
where $K$ is the total number of agents connecting the agent $i$,
$l_{ij}$ are elements in the Laplacian matrix $L$, $y_{ji}$ denotes the observed membrane potential of $j$th agent by the $i$th agent. 


In Reference \cite{girsanov}, Charalambos \textit{et al} have shown a method of achieving optimization problems under a reference probability measure by transferring continuous and discrete-time stochastic dynamic decision systems, via Girsanov's Measure Transformation. 
To this end, we have the following claim.

\begin{proposition}\label{prop:equi}
The collective stochastic dynamic decision system with common team optimality can be transformed to the equivalent static optimization problem with independent distributed sequences. And under the reference probability space, states and observations are independent Brownian motions.
\end{proposition}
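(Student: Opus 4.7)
The plan is to invoke Girsanov's measure transformation to exhibit a reference probability measure $\mathbb{Q}$, equivalent to the original measure $\mathbb{P}$, under which the observation and state-driving noise processes decouple into independent Brownian motions. Once that independence is established, the team-optimal cost functional, originally an expectation of a running cost over correlated state/observation dynamics, can be rewritten as an expectation under $\mathbb{Q}$ weighted by the Radon--Nikodym density, turning the dynamic decision problem into a static optimization over independent distributed Wiener inputs.

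First I would stack the agent states $X_t = (x_1,\ldots,x_N)^\top$ from (\ref{eq:single_net_ddm}) and collect the observations $Y_t = (y_{ji})$ available to the team, writing the joint system in the canonical form $dX_t = b(X_t, Y_t, t)\,dt + \beta(X_t,t)\,dw_t$ together with an observation SDE $dY_t = h(X_t,t)\,dt + d\tilde{w}_t$ derived from the observation channel and the correlation structure of Assumption~\ref{brownagents}. The next step is to construct the Radon--Nikodym derivative
\[
\Lambda_T = \exp\Bigl(-\int_0^T h(X_s,s)^\top\, d\tilde{w}_s - \tfrac{1}{2}\int_0^T \|h(X_s,s)\|^2\, ds\Bigr),
\]
and to define $\mathbb{Q}$ by $d\mathbb{Q}/d\mathbb{P} = \Lambda_T$. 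Verifying Novikov's condition on $h$ is the technical heartbeat of this step: it ensures $\mathbb{E}_{\mathbb{P}}[\Lambda_T]=1$ so that $\mathbb{Q}$ is a bona fide probability measure.

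Then I would apply Girsanov's theorem to conclude that under $\mathbb{Q}$ the shifted process $Y_t$ becomes a standard Brownian motion and, because the state driver $w_t$ is only drift-compensated, $X_t$ and $Y_t$ become driven by independent Brownian motions. With this independence in hand, any team cost functional $J = \mathbb{E}_{\mathbb{P}}[\int_0^T \ell(X_s, u_s)\,ds]$ admits the representation $J = \mathbb{E}_{\mathbb{Q}}[\Lambda_T^{-1}\int_0^T \ell\,ds]$; since the reference dynamics are state-independent Brownian paths, the residual minimization over admissible controls $u$ collapses to a static optimization parameterized by independent Wiener samples, which is precisely the claim of the proposition.

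The main obstacle I anticipate is twofold. First, checking Novikov's condition for an observation drift built from $\alpha(x,t)\,l_{ij}(y_{ji}-x_i)$ requires a linear-growth bound on $\alpha$ and $\beta$ that the paper has not yet made explicit, so I would either assume Lipschitz regularity on these coefficients or truncate via a localizing sequence of stopping times and pass to the limit. Second, the non-trivial correlation $col\in(-1,1)$ among the $w^j$ in Assumption~\ref{brownagents} means the driving Wiener processes are not a priori orthogonal, so before invoking Girsanov I would diagonalize the covariance of the joint noise by an orthogonal transformation and apply the measure change coordinate-wise; otherwise the transformed $X_t$ and $Y_t$ would remain coupled and the static reformulation would fail.
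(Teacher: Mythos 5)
Your proposal is correct and follows exactly the route the paper intends: the paper offers no proof of Proposition~\ref{prop:equi} at all, stating it as a claim imported directly from the cited Girsanov measure-transformation reference, and your sketch is a faithful reconstruction of that argument (change of measure via the Radon--Nikodym exponential, Novikov's condition, independence of states and observations under the reference measure, and the resulting static reformulation of the team cost). In fact the two obstacles you flag --- the missing growth/Lipschitz hypotheses on $\alpha$ and $\beta$ needed for Novikov's condition, and the need to first diagonalize the correlated driving noise of Assumption~\ref{brownagents} before applying Girsanov coordinate-wise --- identify genuine gaps that the paper leaves entirely unaddressed.
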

 
Consider a series of $d$ inputs, $\textbf{X}_i(t)=[x_{i1}(t),...,x_{id}(t)]$, for $N$ agents, each of which has 2 states
\begin{equation}
  s_i =\begin{cases}
    1, & \text{if $t\in(t-\tau_{ref},t] $},\\
    0, & \text{otherwise}.
  \end{cases}
\end{equation}
The firing during $\tau_{ref}$ is sometimes called absolute refractory period.

Such series of inputs can be modeled as discrete decision making scheme, which we have applied the free-response paradigm upon, with adaptive prescribed time interval $\tau_{x}$ for each agent in the network system.
Equation (\ref{eq:single_net_ddm}) then becomes the following distributed protocol
	\begin{equation}\label{eq:measureable_ddm}
		d\textbf{X} = \bigg(L\cdot\alpha(\textbf{X}(t),t)\cdot\big(\textbf{Y}(t)-\textbf{X}(t)\big)+g(t)\bigg)dt + \beta(\textbf{X}(t),t) d\textbf{w}.
	\end{equation}

\subsection{Sampling with EIF}\label{ssec:sample}
Recall from Proposition \ref{prop:equi} that, since the Brownian motions are indpendent of all other team decisions, it opens up the possibility for Markov chain related methods. 
For efficiency and flexibility purpose, Markov Chain Monte Carlo (MCMC) has been applied in sampling the spiking network of neurons \cite{neural_sampling_2011}. 

In Reference \cite{richardson_2009}, Richardson has shown the equilibrium value of a slow driven voltage-activated current gating variable with the form
$ \tau_\Gamma \dfrac{d\Gamma}{dt} = \Gamma_{\infty} - \Gamma$, 
where $\tau_\Gamma(V)$ is an adaptive time constant characterized by different voltage values, and $\Gamma_{\infty}$ is the equilibrium value.	
Then we have $\Gamma=\Phi(\left(V_\Gamma-V\right)/\Delta_\Gamma)$, where $\Phi$ is the sigmoid function that digests a membrane potential function into a probability density function with range $[0,1]$.
	\begin{equation}\label{eq:gateequi}
		\Gamma_{\infty}=\dfrac{1}{1+e^{-(V-V_\Gamma)/\Delta_\Gamma}}.
	\end{equation}
Here, the equilibrium term $\Gamma_{\infty}$ holds a very similar form to Boltzmann probability distribution as in (\ref{eq:bmprobdist}). 

It is clear that (\ref{eq:gateequi}) is a slowly varying function, which can be proved simply by applying the definition.
Now we can use the property of the slow varying function to deal with the exponential term in (\ref{eq:eif_xdt}). 
The Karamata representation theorem is one of the mostly used property of slow varying functions that transfers a function into a general exponential form. In our case, $\Gamma_\infty$ is expressed as: $\Gamma_\infty=\exp  \left( \hslash(\Gamma) + \int_B^\Gamma \frac{\varepsilon(t)}{t} \,dt \right)   $, for some $B>0$, where $\hslash(\Gamma)$ is a bounded measurable function converging to a finite number, and $\varepsilon(t)$ is a bounded measurable function converging to $0$.

Here, since the exponential term is only a property of membrane potential increment that adds to nonlinearity, the potential accumulation of each agent does not affect collective network decision as much during the firing period $(t-\tau_{ref},t]$. And for the absolute refractory period, the neuron model is guaranteed not to fire. For such a piece wise continuous function, if we only consider the time interval to be one firing, then the variable $V$ can be bounded. Taking out the exponential term in (\ref{eq:eif_xdt}), we have
$$ 
\dfrac{CdV_{e}}{\varrho_T(V_e,t)\Delta_T dt} =  \exp \bigg( \dfrac{V_e-V_T}{\Delta——T} \bigg),
$$

$$
\begin{aligned}
\lim_{t\to t_{end}}\exp\left(\dfrac{V_e-V_T}{\Delta_T}\right)&=\lim_{t\to t_{end}}\exp(\hslash(V_e)) \\  
&=\exp\left(\dfrac{V_{end}-V_T}{\Delta_T}\right) = \mathcal{C} \\
&= \tau_\Gamma \dfrac{d\Gamma}{dt}+\Gamma.
\end{aligned}
$$
where $t_{end}$ is the end of the refractory time,  
$V_{end}$ is the membrane potenial at end of the refractory period, and $V_e$ is the membrane potential incremented by exponential term only. Since $\hslash$ converges to a finite number and $\varepsilon(t)$ converges to 0, the limit of the exponential term converges to a constant $\mathcal{C}$ during the refractory period.

It can be thought of as entering an absorbing state where its behavior at infinity is very similar to the behavior of converging to infinity. Therefore, at the absolute refractory period, we can safely ignore this exponential term in (\ref{eq:eif_xdt}), and treat it just as other linear terms.

\begin{figure}[htbp]
  \centering
      \includegraphics[width=0.48\textwidth]{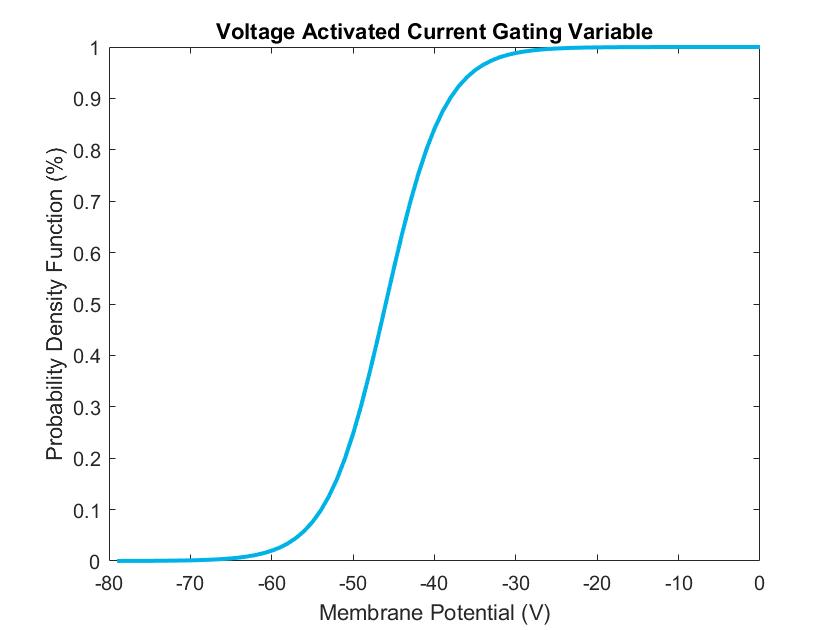}
  \caption{Voltage Activated Current Gating Variable}{\label{fig:boltzmann}}
\end{figure}

For simulation, we have used the same parameters as the work done by Barranca \textit{et al} \cite{Barranca2014}.
As shown in Fig. \ref{fig:boltzmann}, this monotonically increasing activation function has a sigmoidal general form, with an upper bound 1 and lower bound 0. In some cases, as mentioned above, we may set the initial condition to be $0$ mV to eliminate the afterhyperpolarization. For our simulation, we have removed such constraints to show that the dynamics of the membrane potential can be bounded by the activation function with arbitrary initial conditions.
This is because the higher threshold in the EIF model always requires higher voltage to be breached, and the logistic activation function describes such positive correlation well enough.

In fact, very similar dynamics of activation function has been capitalized in the neural sampling frame work \cite{richardson_2009,neural_sampling_2011,sto_infere}. Thus, our collective decision making model can be thought of as a network consisting of $N$ agents (or neurons) sampling from a probability distribution $p$ using the stochastic dynamics carried from the Drift-Diffusion model.

\begin{proposition}
The firing activity of the generalized EIF model, which represents each agent in a collective DDM network system, follows a Markov chain process. 
\end{proposition}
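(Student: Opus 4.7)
The plan is to show that the sequence of firing events of a single EIF agent in the EDM network enjoys the strong Markov property, so that the post-spike states sampled at consecutive firing times form a discrete-time Markov chain. First, I would identify the state variable: take $X_n=\bigl(V_i(\tau_n^+),\Gamma_i(\tau_n)\bigr)$, the pair consisting of the membrane potential and the gating variable immediately after the $n$-th firing of agent $i$. The EIF reset rule sends $V_i$ to $V_R$ after each spike, so $X_n$ lies on a low-dimensional space and the inter-spike behavior on $(\tau_n,\tau_{n+1}]$ is entirely governed by the SDE~(\ref{eq:eif_xdt}).

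Next, I would invoke the closed-form solution of the Ornstein--Uhlenbeck dynamics from Lemma~\ref{thm:sol_ou}. On each interval $(\tau_n,t]$, equation~(\ref{eq:xsol_ref}) expresses $V_i(t)$ as a deterministic functional of $X_n$ plus an It\^o integral against the Brownian increments $\{w_s-w_{\tau_n}:s>\tau_n\}$. Because these increments are independent of the history up to $\tau_n$, the conditional law of the continuation given the full past collapses to its law given $X_n$ alone. The next firing time $\tau_{n+1}$ is the first hitting time of $V_T$ by this continuation, hence $X_{n+1}$ is measurable with respect to $X_n$ together with the future Brownian noise, which delivers the Markov property for the post-spike chain.

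The nonlinearity coming from the exponential term must still be accounted for. Using the absorbing-state analysis developed earlier in Section~\ref{ssec:sample}, this term collapses to the constant $\mathcal{C}$ during the refractory window and therefore acts on each cycle as a deterministic drift with no dependence on earlier history. The Boltzmann form~(\ref{eq:gateequi}) of $\Gamma_\infty$ then supplies the state-dependent firing probability, yielding a well-defined transition kernel $P(X_{n+1}\in A\mid X_n=x)$ that depends on $x$ alone.

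The main obstacle will be the apparent coupling introduced through $I_{neib}$ and through the correlated Wiener processes imposed by Assumption~\ref{brownagents}: taken at face value, the single-agent filtration is too small to support a clean Markov argument. I would resolve this by passing to the reference probability measure furnished by Proposition~\ref{prop:equi}, under which states and observations are independent Brownian motions. On that space the previous argument applies verbatim to the enlarged state $(V_i,\Gamma_i,\{Y_{ji}\}_j)$, and pulling back via Girsanov's transformation preserves the Markov property of the resulting firing sequence.
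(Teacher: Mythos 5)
Your proposal is correct in substance, and it is worth noting that the paper itself offers no proof of this proposition at all: it is stated bare, supported only by the surrounding narrative --- namely that Proposition~\ref{prop:equi} renders states and observations independent Brownian motions under the reference measure, that the exponential term collapses to the constant $\mathcal{C}$ over the refractory window, and that the transition probability~(\ref{eq:prob_each}) is written as a function of $X_i(t-1)$ and the currently observed neighbor potentials only. You use exactly these ingredients, but you assemble them into an actual argument: embedding the chain at post-spike states $X_n=\bigl(V_i(\tau_n^+),\Gamma_i(\tau_n)\bigr)$, invoking the closed-form OU solution~(\ref{eq:xsol_ref}) of Lemma~\ref{thm:sol_ou} to show the continuation depends only on $X_n$ and independent Brownian increments, and treating $\tau_{n+1}$ as a first hitting time of $V_T$ so that the strong Markov property delivers the transition kernel. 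That is strictly more than the paper does, and it correctly identifies the one real obstruction --- the coupling through $I_{neib}$ and the correlated noise of Assumption~\ref{brownagents} --- which the paper papers over. Your resolution (enlarge the state to include the observations, work under the reference measure, pull back by Girsanov) is the right repair and is consistent with how the paper uses Proposition~\ref{prop:equi}. The only caveat worth flagging is that under the original measure the observed processes $Y_{ji}$ are themselves driven by the other agents' own neighborhoods, so the single-agent enlarged state is genuinely Markov only under the reference measure; strictly, the Markov property one recovers after the Girsanov pull-back is for the full network state, with the single-agent firing sequence Markov only conditionally on the observed neighbor trajectories. Since that is precisely the conditional structure encoded in~(\ref{eq:prob_each}), your argument lands where the paper intends, but you should state that qualification explicitly rather than claiming the argument applies ``verbatim.''
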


With the information-coded signal from each DDM agent of the system, in other words, the firing information within the time interval $(t-\tau_{ref},t]$, the neural sampling follows conditional probability distribution, and most of time is Boltzmann distributions.

For each agent, we treat the collective behavior of connected nodes as a accelerator/damper. For instance, if the agent is surrounded by nodes with higher membrane potential, it receives more current than normal drifting, and vice versa.

\begin{flushleft}
$ p(s_i=1|X_i(t-1))=  $
\end{flushleft}
\begin{equation}\label{eq:prob_each}
\dfrac{X_i(t-1)+g(t)+\sum_{j=1}^K\alpha(t)l_{ij}\big(y_{ji}-x_i(t)\big)}{V_T+\dfrac{\sum_{j=1}^K\alpha(t)l_{ji}\big(y_{ij}-x_j(t)\big)}{K}}.
\end{equation}

\section{Collective Behavior}\label{sec:coll_beh}

For most multi-agent dynamic network systems, it is common that new communication links can be established over two agents with no previous connection. In the previous sections, we have assumed that the connection is known at each state. Now we define the coupling and connection behavior among the agents in the system.

\subsection{Coupling and Connectivity}

\begin{assumption}\label{asp:coup_neib}
For the system described in this paper, agent $i$ forms at most $K$ outward links randomly at $t=0$. 
When $s_i=1$, Agent $i$ tries to establish new connections with new neighbors, for instance, connecting to Agent $j$ with the coupling probability $\mathcal{P}_{ij}$ depends on the voltage difference. When successful, the equal number of previous connections are lost according to a decoupling probability function $\mathcal{Q}$. When $s_i=0$, Agent $i$ will not actively modify its neighboring connections.
\end{assumption}

It is worth pointing out that both $\mathcal{P}$ and $\mathcal{Q}$ follow a sigmoid (or reverse sigmoid) relation $\Phi$ (or $1-\Phi$).
For function $\mathcal{P}$, the greater the difference in membrane potentials has, the higher the probability is. The situation is reversed for function $\mathcal{Q}$, the greater the difference in membrane potentials has, the lower the probability is.   

Then we have the following equations
$$
\mathcal{P}_{ij} = \dfrac{X_j-X_i}{V_{Ti}+V_{Tj}}\dfrac{K-\mathcal{K}_i}{K},
$$
$$
\mathcal{Q}_{ij} = \left(1-\dfrac{X_i-X_j}{V_{Ti}+V_{Tj}}\right)\dfrac{\mathcal{K}_i}{K},
$$
where $\mathcal{K}_i$ is the number of current established connections of node $i$. In the case of probability values that are less than zero or greater than one, we simply set them to $0$ and $1$ respectively for mean field analysis. The negative probability is also provided in Fig. \ref{fig:globalbranch} . Different from the common equation for branching probability, our process is not a tree-like process and the maximal connectivity is defined by $K$. Therefore they are Markovian processes with respect to the number of connected nodes.

\begin{figure}[htbp]
  \centering
      \includegraphics[width=0.48\textwidth]{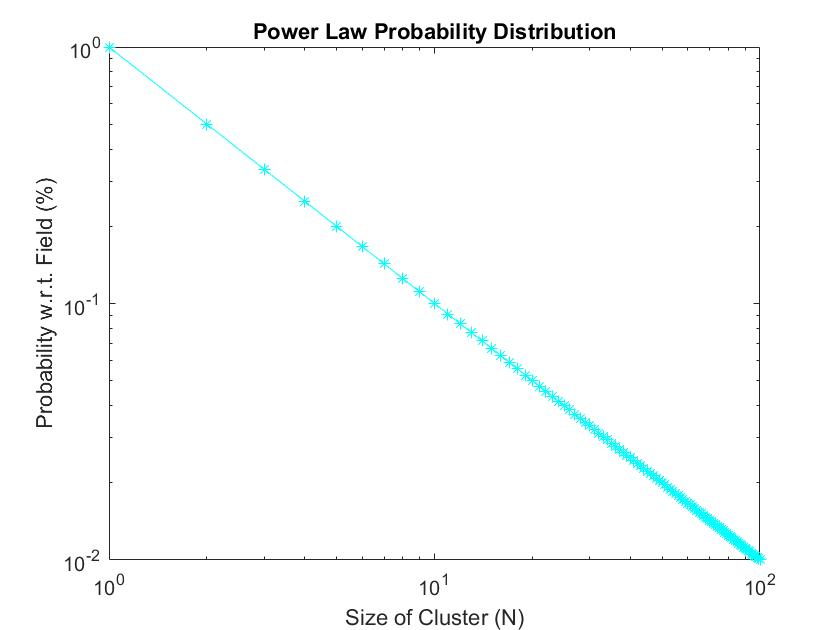}
  \caption{Power Law Probability Distribution}{\label{fig:powerlaw}}
\end{figure}

\subsection{Mean Field Analysis}

In the mean field analysis, the individual drifting variable usually follows a distribution with the average. In our case, we assume that the expanded term follows the average distribution of $g_i/W$, where $W$ can be considered as the sum of the total drift. 
Here $g_{i}$ is still a gain function representing the external input exerted by neighbors.
However, due to the free-response property, $g_{i}$ is essentially a piece-wise continuous gain function representing synaptic strength.

Recalling the term ``active site" described in Section {\ref{ssec:socintro}}, the local active site for each agent can be thought as the local field: 
$\mathcal{F}_i=-\sum_j^Na_{ji}g_{j}/W$,
where $a_{ji}$ is the elements in an adjacency matrix, then the global field has the average 
$\bar{\mathcal{F}}=\sum^N_i  \left(\mathcal{F}_i/K\right)/N $.

Therefore, with regard to a sequence of $d$ number of inputs, the probability of a single DDM having the state $s_i=1$, represented by a single generalized EIF neuron, has approximately the following probability with respect to the field. 
\begin{equation}\label{eq:eifboltz}
\begin{aligned}
p_i(s) = &S^{-1} \exp\bigg(-\gamma \bigg(\sum_{j=1}^{N \setminus K}\bar{\mathcal{F}}\mathcal{P}_{ij}-\\
&~~~~~~~~~~~\sum_{j=1}^K\mathcal{F}_j \mathcal{Q}_{ij}+\sum_0^d b_i \bigg)\bigg),
\end{aligned}
\end{equation}
where $S$ is some partition function, $b_i$ is the bias term that supports the correct choice, $\gamma$ is a thermodynamic beta in Boltzmann factor with the form $\gamma = 1/\left(k_B \mathcal{F}_i\right)$, and $k_B$ is a Boltzmann constant.

Putting aside the coupling strength, the mean activity of this network is measured as 
$\sum^N_{i=1}s_i/N$.
However, for such a spiking neural model, it is not always practical to have normally distributed probability density function, and in fact, most of these processes are stochastic with certain thresholds or even highly constrained. That being said, we would have a stochastic It\^{o} based integral for probability density for the mean activity: 
$\mathcal{H}=\int^{\infty}_{-\infty}\Phi(X)p(s_i|X)dX$.

\section{Convergence Analysis}\label{sec:converge}

There are two main evidences for systems presenting the SOC behavior \cite{bak_1996,Watkins2016}, the first is the power law distribution, and the second is the critical dynamics, which we consider as converging to absorbing states in the EDM model. 
In this section, we expand the results from Sections \ref{sec:dm_dynam} and \ref{sec:coll_beh}, and examine the global convergence behavior of the collective EDM model. We then provide both pieces of evidence to show the EDM system has the SOC behavior.

\subsection{Global Criticality from Local Dynamics}\label{ssec:globfloc}
Recall that in SOC, active nodes trigger self avalanches when a threshold is reached, and update the information of all connecting active nodes.
Also, nodes in nearby inactive sites will be communicated to establish more connections if needed.

Moreover, each local active site has their own dynamics of reaching out to other active or inactive sites. 
Inspired by the work of Harris on the theory of branching processes \cite{harris2002theory}, we use a branching parameter $\sigma$ that captures the subsequent activity of connectivity triggering or dying-out \cite{soc_branching}.
The \textit{local branching ratio} and \textit{global branching ratio} have the form

$$\sigma_j(t)= \sum^K_{i=1} \mathcal{P}_{ij}(t), \quad\tilde{\sigma}(t)=\dfrac{1}{N-1}\sum^N_{j=1}\sigma_j(t),$$
respectively. As discussed in \cite{harris2002theory,soc_branching}, the system exhibits criticality at $\sigma = 1$, and is sub-critical (super-critical) for $\sigma < 1$ ($\sigma > 1$).

\begin{figure}
    \centering
    \begin{subfigure}[thpb]{0.48\textwidth}
        \includegraphics[width=\textwidth]{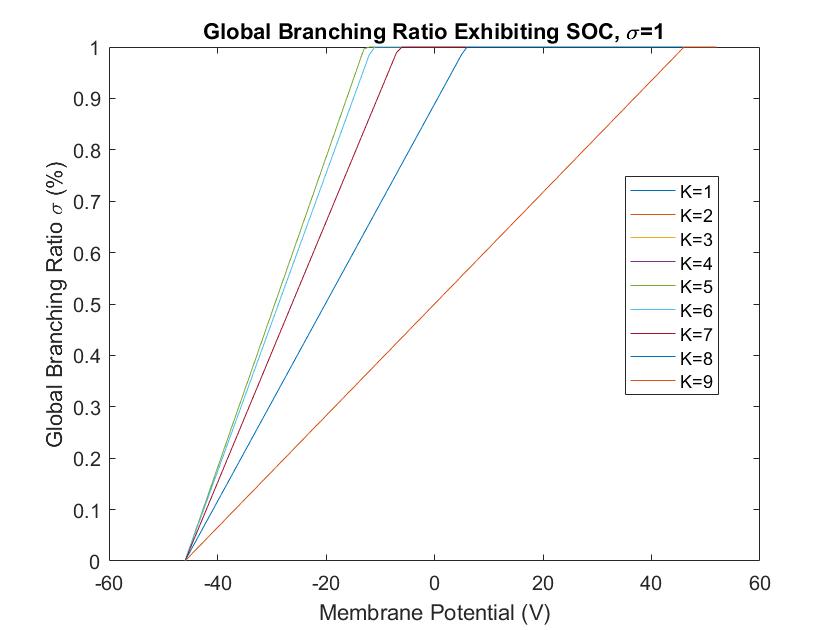}
        \caption{}
        \label{fig:gbmp}
    \end{subfigure}
    ~ 
    \begin{subfigure}[thpb]{0.48\textwidth}
        \includegraphics[width=\textwidth]{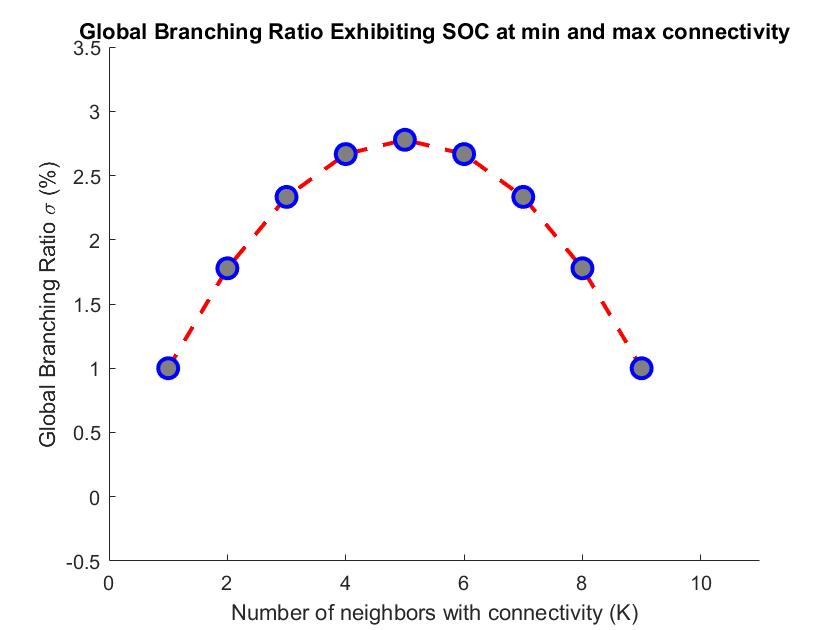}
        \caption{}
        \label{fig:gbnn}
    \end{subfigure}
    \caption{In Fig. (\ref{fig:gbmp}), as the firing probability increases with the membrane potential, $\sigma$ converges to 1 with proper connectivity constraints. Cases with different numbers of active and connected neighbors are shown in a network system with $N=10$. Fig (\ref{fig:gbnn}) removes the constraint that cumulation of local branching ratio in each iteration caps at 1. It is clear that as the number of connected neighbors increases, the network system enters the active/loading phase first, and then evolves to the dissipation/absorbing phase. The system clearly shows SOC dynamics, that is $\sigma = 1$ at both minimal and maximal connectivity.}\label{fig:globalbranch}
\end{figure}

In our simulation, sub-critical dynamics are characterized by low potential and rapidly decaying agent neuron firing distributions, while super-critical dynamics are characterized by high potential and slowly decaying firing activities. 
Critical dynamics are characterized by firing activity that follows power-law distributions.

As shown in Fig. \ref{fig:powerlaw}, it can be easily recognized that the collective behaviors of the firing density in the EDM model follow a power law distribution, 
$\mathcal{D}(z) \sim z^{\Lambda}$ with different cluster size $z$ and scaling factor $\Lambda<0$, which is a primary evidence supporting the SOC behavior \cite{Watkins2016}.

\subsection{Absorbing States}\label{ssec:absorb}

Recalling the avalanches described in self-organized criticality, the system keeps tuning itself to one of many meta-stable states, which commonly have lifespans shorter than ground state and longer than excited states \cite{bak_1996}. And without further inputs, the distribution reaches meta-stability, a very special energy well that is able to temporarily trap the system for a limited number of states. 

This can be modeled as Boltzmann distribution with global energy level in a simulated annealing system from any initial conditions. 
With the results from (\ref{eq:prob_each}) and (\ref{eq:eifboltz}), as well as the exponential property of the generalized EIF system, the Lyapunov based semistability \cite{HHB:TAC:2008} can be achieved with great potential, but due to the length restriction, this concept will not be discussed in this paper. Nevertheless, we are going to show that the system described in this paper converges to some absorbing states.

In mean field theory, as being discussed in \cite{soc_branching}, absorbing state becomes unstable when the probability of a node creating connection with neighbors is greater than $1/2$. In our case, this can be thought as the coupling probability  
$\mathcal{P}>\mathcal{P}_{critical} = 1/2$.

\begin{lemma}\label{lem:multi_absorb}
The attractor of the system is a set of discrete states.
\end{lemma}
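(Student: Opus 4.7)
The plan is to combine three ingredients already established in the paper: the binary nature of the firing states $s_i\in\{0,1\}$, the Markov chain property of the EIF firing activity, and the Boltzmann/energy-minimization viewpoint derived in equations~(\ref{eq:eifboltz}) and (\ref{eq:prob_each}). The attractor will be identified with the set of configurations at which the collective probability mass concentrates in the long-time limit, and the claim is that this set is zero-dimensional (i.e.\ discrete).

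First I would reduce the question from the continuous membrane-potential state space to the discrete configuration space. Although each $x_i(t)$ evolves continuously under the OU dynamics of Lemma~\ref{thm:sol_ou}, the relevant global observable for the decision problem is the binary vector $\mathbf{s}(t)=(s_1,\dots,s_N)\in\{0,1\}^N$, whose cardinality is $2^N$. I would argue that the attractor of the EDM, viewed through the decision-encoding map $\mathbf{X}\mapsto\mathbf{s}$, lives inside this finite set, so any limiting distribution is supported on a finite number of points. This uses Assumption~\ref{brownagents} together with the refractory-period argument that justified treating the exponential term as the constant $\mathcal{C}$ in Section~\ref{ssec:sample}.

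Next I would invoke the Markov chain proposition on EIF firing, together with the coupling/decoupling rules of Assumption~\ref{asp:coup_neib}, to classify the states of the induced chain on $\{0,1\}^N$. The local branching parameter $\sigma_j$ and the condition $\mathcal{P}<\mathcal{P}_{\text{critical}}=1/2$ recalled from \cite{soc_branching} imply that certain configurations are absorbing: once the system enters a configuration with $\sigma\le 1$ and $\mathcal{P}<1/2$ at every node, no further coupling events occur and the chain stops moving in $\mathbf{s}$-space. Finally, the Boltzmann form~(\ref{eq:eifboltz}) shows that the stationary probability is exponentially concentrated on the energy minima of $E(\mathbf{s})=\sum\bar{\mathcal{F}}\mathcal{P}_{ij}-\sum\mathcal{F}_j\mathcal{Q}_{ij}+\sum b_i$; because this energy is a function on a finite set, its minimizers form a finite, hence discrete, subset. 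Combining the two observations, the $\omega$-limit set of the EDM dynamics projected onto the decision space is contained in this finite collection of meta-stable/absorbing configurations.

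The main obstacle I expect is separating the continuous drift-diffusion layer from the discrete decision layer in a mathematically clean way: the membrane potentials themselves never literally settle onto isolated points, so the statement only makes sense once we quotient by the firing map $\mathbf{X}\mapsto\mathbf{s}$ and by the refractory reset rule that collapses post-spike potentials to $V_R$. I would handle this by defining the attractor as the support of the stationary measure of the induced discrete chain, and by using Karamata's representation (already invoked in Section~\ref{ssec:sample}) to show that the nonlinear EIF term contributes only a bounded constant over each refractory window, so the Markov reduction is faithful. A secondary subtlety is ensuring that $\mathcal{P}_{ij}$ and $\mathcal{Q}_{ij}$ truly drop below the $1/2$ threshold for configurations in the attractor; here I would appeal to the branching analysis of Section~\ref{ssec:globfloc} showing $\tilde{\sigma}\to 1$ at criticality, which pins the effective coupling rate at the boundary between the active and absorbing phases and therefore restricts the recurrent class to a finite meta-stable set.
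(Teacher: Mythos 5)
Your proposal reaches the same conclusion by the same essential mechanism as the paper --- the attractor is discrete because the relevant configuration space is finite and the Boltzmann/annealing picture concentrates the long-run distribution on a subset of it --- but you make explicit a step the paper's own proof leaves entirely implicit. The paper argues physically (non-conserving drift load versus diffusion dissipation, criticality after avalanches, infinitely many infinitesimally varied absorbing states in the thermodynamic limit) and then simply asserts that ``for a finite number of total states, this absorbing phase becomes a set of discrete states,'' without ever saying what the finite state set is or why a continuous drift-diffusion system should be judged in it. You supply exactly that: the quotient through the firing map $\mathbf{X}\mapsto\mathbf{s}\in\{0,1\}^N$, the refractory reset to $V_R$, and the Karamata/constant-$\mathcal{C}$ treatment of the exponential term, which together justify passing to an induced finite Markov chain whose recurrent classes are necessarily finite. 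This is a strictly more defensible version of the argument, and the Markov-chain classification of absorbing configurations is an ingredient the paper's proof does not use at all (it appears only in the surrounding discussion of $\mathcal{P}_{\rm critical}=1/2$). One caveat: your claim that $\mathcal{P}<1/2$ at every node means ``no further coupling events occur and the chain stops moving'' overstates what the threshold gives you --- $\mathcal{P}<1/2$ only makes the absorbing phase stable in the mean-field sense of \cite{soc_branching}; coupling events still occur with positive probability, so the correct conclusion is that the recurrent class is trapped in a finite set of meta-stable configurations, not that the chain literally halts. With that wording adjusted, your argument is at least as strong as, and considerably more concrete than, the one in the paper.
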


\begin{proof}
If a non-conserving system, such as the drift and diffusion based EDM model described in this paper, has shown a temporary stable configuration after the avalanches, then the system is at least at a critical point. The critical and super-critical session are usually slow driving \cite{soc_branching}. So there must be a drift load and a diffusion dissipation fluctuating to keep all the nodes in the system from either forming active sites, or staying quiescent completely.

Thus, if the system presents the thermodynamic behavior as Boltzmann distribution with simulated annealing, there can exist infinite numbers of infinitesimally varied absorbing states in thermodynamic limit. As for a finite number of total states, this absorbing phase becomes a set of discrete states.
\end{proof}

This obeys another property of SOC, that is, the dynamical system with a critical point as an attractor, is able to keep itself at the critical point between two phases, which in our case, are the active phase and the absorbing phase.

\begin{proposition}
Meta-stable states generally hold more energy than the ground states, and less energy than the excited states. Therefore, SOC is the process of the EDM model losing global energy, and falling into a certain set of absorbing states, regardless of guaranteed stability.
\end{proposition}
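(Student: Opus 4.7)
The plan is to establish the energy ordering of meta-stable states first, and then argue that the EDM dynamics drive probability mass downward into the lower-energy absorbing set identified in Lemma \ref{lem:multi_absorb}. First, I would appeal to the Boltzmann form of (\ref{eq:boltzmenergyt}) and the mean-field expression (\ref{eq:eifboltz}): since $p_i(s)\propto \exp(-\gamma E)$ with $\gamma=1/(k_B\mathcal{F}_i)>0$, the occupation probability of any macrostate is a monotonically decreasing function of its energy. Therefore at any fixed effective ``temperature'' the ground state dominates, excited states are exponentially suppressed, and meta-stable wells sit strictly between these two extremes, which immediately justifies the energy ranking asserted in the first sentence of the proposition.

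Second, I would argue that the EDM dynamics act as a simulated-annealing schedule driving the system toward the lower end of this ranking. The refractory absorption argument in Section \ref{ssec:sample} shows that during $(t-\tau_{ref},t]$ the exponential nonlinearity collapses to the constant $\mathcal{C}$, so the voltage dynamics become effectively a linear Ornstein--Uhlenbeck relaxation whose stationary distribution is of Boltzmann type. Combining this with the mean-field decomposition of $\bar{\mathcal{F}}$ and the branching ratio $\tilde{\sigma}(t)\to 1$ observed in Section \ref{ssec:globfloc}, the system settles onto the critical manifold, where by Lemma \ref{lem:multi_absorb} the attractor is a discrete set of absorbing states rather than a single equilibrium.

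Third, to close the loop I would invoke a Lyapunov-style argument on the expected global energy $\langle E\rangle_t$. Between any two meta-stable wells, the ratio of forward to backward transition probabilities in (\ref{eq:eifboltz}) equals $\exp(\gamma\Delta E)$, so transitions to lower energy are exponentially favored over the reverse; hence $\langle E\rangle_t$ is non-increasing along sample paths conditioned on non-escape and strictly decreasing whenever the system is not already in the absorbing set. Taking $t\to\infty$ and using the discreteness from Lemma \ref{lem:multi_absorb} yields convergence in distribution to the absorbing states, which is exactly the second sentence of the proposition, with no classical Lyapunov stability required.

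The main obstacle will be the third step: making the monotone decrease of $\langle E\rangle_t$ rigorous, because the EDM is neither reversible nor strictly thermal. The drift $g(t)$ injects energy, and the coupling/decoupling probabilities $\mathcal{P}$ and $\mathcal{Q}$ are not tied by detailed balance. I would sidestep this by restricting the energy accounting to the refractory windows, where the exponential term has been frozen into $\mathcal{C}$ and the dynamics reduce to Ornstein--Uhlenbeck with a Boltzmann stationary measure, and arguing that between refractory windows the net drift contribution is $O(\tau_{ref})$ relative to the diffusive relaxation. The clause \emph{regardless of guaranteed stability} in the statement itself signals that we need only convergence to the absorbing set in the sense of Lemma \ref{lem:multi_absorb}, not a uniform asymptotic stability certificate, which keeps the argument within the scope of what the preceding sections have already proved.
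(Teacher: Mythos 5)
Your third step contains a genuine gap. You derive the claim that $\langle E\rangle_t$ is non-increasing from the assertion that the ratio of forward to backward transition probabilities between two wells equals $\exp(\gamma\Delta E)$. That is a detailed-balance relation, and it cannot be read off from (\ref{eq:eifboltz}): that equation gives a stationary \emph{occupation} probability, not a transition kernel, and a stationary measure constrains transition rates only if reversibility is assumed. You concede yourself that the drift $g(t)$ injects energy and that $\mathcal{P}$ and $\mathcal{Q}$ are not tied by detailed balance, so the relation you need is exactly the one the model fails to satisfy. The proposed sidestep does not repair this: freezing the exponential term to $\mathcal{C}$ during $(t-\tau_{ref},t]$ makes the within-window dynamics Ornstein--Uhlenbeck, but the energy bookkeeping that matters happens \emph{between} refractory windows, when agents fire, recouple, and receive drift input; the claim that this contribution is $O(\tau_{ref})$ relative to the diffusive relaxation is not established anywhere in the preceding sections. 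As written, the monotone decrease of the expected global energy is asserted rather than proved. Your first step is also slightly circular: the Boltzmann form orders occupation probabilities by energy, but it does not locate meta-stable wells between the ground and excited states --- that ordering is essentially the definition of a meta-stable state and has to be taken as a premise, which is how the paper treats it.

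For comparison, the paper does not attempt an energy-Lyapunov argument at all. Its justification is a direct physical identification: an excited (pre-fire) state carries high membrane potential; upon firing the potential resets to $V_R$ and the agent is trapped for the absolute refractory period $\tau_{ref}$; these refractory states are declared to be the meta-stable states, sitting energetically between the excited states and the ground state; and Lemma \ref{lem:multi_absorb} supplies the discreteness of the absorbing set to which the system self-organizes. That route is far less ambitious than yours --- it proves no monotonicity and invokes no transition-rate structure --- but it also avoids the detailed-balance trap. If you want to keep your Lyapunov-style framing, you would need either to restrict to a genuinely reversible sub-dynamics or to construct an explicit supermartingale for the energy that accounts for the energy injected by $g(t)$ and by recoupling, neither of which the paper's machinery currently provides.
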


Each agent in the EDM model is essentially drift diffusion terms taking input variable from EIF markup. It is clear that when the individual thresholds are reached, the agents will initiate the spike and send information-coded signals (current) to connected nodes or nodes with great probability of establishing connectivity. 
The exited states usually carry higher membrane potential than the incremental states, and then the fired node resets its membrane potential to $V_R$ and enters a refractory period. Therefore, the states during this absolute refractory period $\tau_{ref}$ can be considered as comparable meta-stable states that trap the dynamics of each node for $\tau_{ref}$. And since we have proved Lemma \ref{lem:multi_absorb}, the system self-organizes itself, instead of fine-tuning, to a small region of absorbing criticality, or in another word, metastability.

Since the system is slow varying at the absorbing state, the fundamental solution is independent of $t$, that is, $\alpha(X(t),t)) = \alpha(X)$. And (\ref{eq:xsol_ref}) becomes
	\begin{equation}\label{eq:xsol_ref_inv}
		X = e^{\alpha(t-\tau_{ref})}c+\int^t_{t_0}e^{-L\alpha(t-\eta)} \big(g(\eta)d\eta+\beta(\eta)dw_{\eta} \big) .
	\end{equation}

At such slow varying, time independent absorbing states, it is natural to assume that the dimension of $\beta$ is only 1. Using the corollary in \cite{arnold1974stochastic},
$ \phi(t) = \exp\bigg( \int^t_{(t_0)} \alpha(\eta)d\eta \bigg)$.

Then we can further turn Theorem \ref{thm:sol_ou} to be
\begin{equation}\label{eq:sol_ou_gen}
\begin{aligned}
&X(t) = \exp\bigg( \int^t_{(t_0)} \alpha(\eta)d\eta \bigg)\bigg(c + \\
&~~~~\int^t_{t_0}\exp \bigg( \int_{t_0}^s A(u)du \bigg) \big(g(\eta)d\eta+\beta(\eta)dw_{\eta} \big) \bigg) .
\end{aligned}
\end{equation}

The threshold presents the local rigidity level. Also, the convergence dynamics of these absorbing states show Boltzmann distribution as well, i.e.,
$	
		P_i(V_x|s=0)= e^{-X_i/k_B\bar{\mathcal{F}}}/\left(\sum_{j=1}^N e^{-X_i/k_B\bar{\mathcal{F}}}\right).
$	
	
As shown in Fig. \ref{fig:globalbranch}, the branching pattern of the EDM model across multiple fires follows the SOC behavior and will eventually evolve to a certain set of absorbing states, known as the recurrence sets.
Also, these stationary distributions of network states can be convergent from any initial state.

Therefore, without the presence of a proper controller, the system fine-tunes itself, and then converges to  
$   
\bar{u} = \lim_{t \to \infty} \sum_0^{t-1} \mathbb{E}(X(t))/t.
$

\section{Conclusion and Future Work}\label{sec:conclusions}

\subsection{Conclusion}

In this work, we have proposed a collective decision making model with a specific type of spiking neurons, exponential integrate-and-fire (EIF). Our method is based on the well-known Two-Alternative Forced Choice Task solver--Drift-Diffusion model. We recognize that DDM and EIF share very common terms in their dynamic equations, and the exponential term can be ignored during the absolute refractory period. We have derived the probability of each agent's firing based on a Markov chain conditional premise. Then the mean field theory is used to approximate the global criticality from local dynamics. 

Both analytically and experimentally, we have found out that the global branching ratio follows a power law distribution and the EDM system eventually evolves to a set of absorbing states, which are two main evidences suggesting the Self-Organized Criticality behavior. The activation function follows the Boltzmann state probability and the convergence dynamics of absorbing states follow Boltzmann distribution as well.

\subsection{Future Work}

At this point, we have set up a detailed model that is ready to be expanded from different aspects. For instance, since theoretical and experimental studies have demonstrated that critical systems are often optimizing computational capability, it is promising to suggest that the system with the SOC behavior is both robust and flexible to ensure homeostatic stability.

In fact, due to the nature of absorbing states and criticality property, any initial conditions of a spiking network decision making system can converge and/or fluctuate around a set of states, potentially semistability \cite{HHB:TAC:2008}. Therefore, the convergence property of such a model can be useful for fault pre-screening and is in a way, robust to quantified uncertainties.

Besides the theoretical analysis, there are potential applications as well. Bak has demonstrated that both traffic dynamics and brain dynamics exhibit similar criticality \cite{bak_1996}. This opens up the way to extend our model to formulate real-world applications. With specific problem solving scenarios, it is natural to extend the proposed multi-agent spiking neural decision making system to a parallel distributed process, and eventually leads to a neuromorphic chip development.

Since the nature of SOC is to organize the system between two phases, the fast switching capability is useful for sensitivity analysis. Furthermore, we will also look into fully stochastic neurons such as \textit{Galves-L\"{o}cherbach} (GL) model to incorporate different activation profiles and to increase the candidate choices as well.  

\balance

\bibliographystyle{IEEEtran}
\bibliography{IEEEabrv,library}

\end{document}